\newtheorem{theorem}{Theorem}[section]
\newtheorem{corollary}[theorem]{Corollary}
\newtheorem{remark}[theorem]{Remark}
\newtheorem{example}[theorem]{Example}
\newtheorem{lemma}[theorem]{Lemma}
\newcommand{\fq}{\mathbb{F}_q}
\newcommand{\RM}{\mathrm{RM}_q}
\newcommand{\red}{\mathrm{red}}
\newcommand{\Ev}{\mathrm{Ev}}
\def\AA{{\mathbb A}}
\newcommand{\Ze}{\mathsf{Z}}
\begin{document}

\title{A note on the generalized Hamming weights of Reed--Muller codes}
\author{Peter Beelen\footnote{Department of Applied Mathematics and Computer Science, Technical University of Denmark, DK-2800, Kongens Lyngby, Denmark, pabe@dtu.dk}}
\date{\empty}
\maketitle

\abstract{
In this note, we give a very simple description of the generalized Hamming weights of Reed--Muller codes. For this purpose, we generalize the well-known Macaulay representation of a nonnegative integer and state some of its basic properties. \\ Keywords: Reed--Muller code, Macaulay decomposition, generalized Hamming weight. \\ MSC: 11H71, 94B27}

\section{Preliminaries}
\label{sec:in}

Let $\fq$ be the finite field with $q$ elements and denote by $\AA^m:=\AA^m(\fq)$ the $m$-dimensional affine space defined over $\fq$. This space consists of $q^m$ points $(a_1,\dots,a_m)$ with $a_1,\dots,a_m \in \fq$. Let $T(m):=\fq[x_1, \dots , x_m]$ denote the ring of polynomials in $m$ variables and coefficients in $\fq.$ Further let $T_{\le d}(m)$ be the set of polynomials in $T(m)$ of total degree at most $d$. A monomial $X_1^{\alpha_1}\cdots X_m^{\alpha_m}$ is called reduced if $(\alpha_1,\dots,\alpha_m) \in \{0,1,\dots,q-1\}^m$. Similarly a polynomial $f \in T(m)$ is called reduced if it is an $\fq$-linear combination of reduced monomials. We denote the set of reduced polynomials by $T^\red(m)$ and define $T^\red_{\le d}(m):=^T_{\le d}(m) \cap T^{\red}(m)$.

One reason for considering reduced polynomials comes from coding theory. Indeed Reed--Muller codes are obtained by evaluating certain polynomials in the points of $\AA^m$, but the evaluation map
$$\Ev: T(m) \to \fq^{q^m}, \ \makebox{defined by} \ \Ev(f)=(f(P))_{P \in \AA}$$ is not injective. However, its restriction to $T^\red(m)$ is.
In fact the kernel of $\Ev$ consists precisely of the ideal $I \subset T(m)$ generated by the polynomials $x_i^q-x_i$ ($1 \le i \le m$). Working with reduced polynomials is simply a convenient way to take this into account, since for two reduced polynomials $f_1,f_2 \in T(m)$ the equality $f_1+I=f_2+I$ holds if and only if $f_1=f_2$.

The Reed--Muller code $\RM(d,m)$ is the set of vectors from $\fq^{q^m}$ obtained by evaluating polynomials of total degree up to $d$ in the $q^m$ points of $\AA^m$, that is to say:
$$\RM(d,m):=\{(f(P))_{P \in \AA^m} \, : \, f \in T_{\le d}(m)\}.$$
By the above, we also have $\RM(d,m):=\{(f(P))_{P \in \AA^m} \, : \, f \in T_{\le d}^\red(m)\}$ and moreover, we have
\begin{equation}\label{eq:dimRM0}
\dim \RM(d,m)=\dim T^\red_{\le d}(m).
\end{equation}
Reed--Muller codes $\RM(d,m)$ have been studied extensively for their elegant algebraic properties. Their generalized
Hamming weights $d_r(\RM(d,m))$ have been determined in \cite{HP} by Heijnen and Pellikaan.
For a general linear code $C \subseteq \fq^n$ these are defined as follows:
$$d_r(C):=\min_{D \subseteq C: \dim D=r}|\mathrm{supp}(D)|,$$
where the minimum is taken over all $r$-dimensional $\fq$-linear subspaces $D$ of $C$ and where $\mathrm{supp}(D)$ denotes the support size of $D$, that is to say $$\mathrm{supp}(D):=\#\{i \, : \, \exists \, (c_1,\dots,c_n) \in D, \ c_i \not = 0\}.$$
In case of Reed--Muller codes, there is a direct relation between generalized Hamming weights and the number of common solutions to systems of polynomial equations. Indeed, if $D \subset \RM(d,m)$ is spanned by $(f_i(P))_{P \in \AA}$ for $f_1,\dots,f_r \in T_{\le d}^\red(m)$, then $\mathrm{supp}(D)=q^m-\#\Ze(f_1,\dots,f_r)$ where $\Ze(f_1, \dots , f_r):=\{P \in \AA^m \,:\, f_1(P)=\cdots =f_r(P)=0\}$ denotes the set of common zeros of $f_1, \dots , f_r$ in the $m$-dimensional affine space $\AA^m$ over $\fq$.
Therefore, if we define
\begin{equation}
\label{erdmA}
\bar e^{\AA}_r(d,m): = \max\left\{ \left| \Ze(f_1, \dots , f_r) \right| \,:\,  f_1, \dots , f_r \in T^\red_{\le d}(m) \; \text{ linearly independent}\right\} ,
\end{equation}
then $d_r(\RM(d,m))=q^m-\bar e^{\AA}_r(d,m).$
Note that $T^\red(m)$ is a vector space over $\fq$ of dimension $q^m$ and that a reduced polynomial has total degree at most $m(q-1)$. Therefore $T^\red(m)=T^\red_{\le m(q-1)}(m)$. This implies in particular that $\RM(d,m)=\fq^{q^m}$ for $d \ge m(q-1)$. Therefore, we will always assume that $d \le m(q-1).$

The result of Heijnen--Pellikaan in \cite{HP} on the value of $d_r(\RM(d,m))$ can now be restated as follows, see for example \cite{BD}.
\begin{equation}
\label{Hrdm}
\bar e^{\AA}_r(d,m)= \sum_{i=1}^m \mu_i q^{m-i},
\end{equation}
where $(\mu_1, \dots, \mu_{m})$ is the $r$-th $m$-tuple in descending lexicographic order among
all $m$-tuples $(\beta_1, \dots , \beta_{m})\in \{0,1,\dots,q-1\}^m$ satisfying $\beta_1+ \cdots + \beta_{m} \le d$.

Following the notation in \cite{HP}, we denote with $\rho_q(d,m)$ the dimension of $\RM(d,m)$. Equation \eqref{eq:dimRM0} implies that $\rho_q(d,m)=\dim(T_{\le d}^\red(m)).$ In particular, we have
\begin{equation}
\rho_q(d,m)=\dim(T_{\le d}(m))=\binom{m+d}{d}, \ \makebox{if $d \le q-1$},
\end{equation}
since $T_{\le d}(m)=T_{\le d}^\red(m)$ if $d<q.$ Here as well as later on we use the convention that $\binom{a}{b}=0$ if $a<b$. In particular we have $\rho_q(d,m)=0$ if $d<0$.
As shown in \cite[\S 5.4]{AK}, for the general case $d \le m(q-1)$, we have
\begin{equation}\label{eq:dimRM}
\rho_q(d,m)=\dim(T^\red_{\le d}(m))=\sum_{i=0}^d \sum_{j=0}^{m} (-1)^j \binom{m}{j} \binom{m-1+i-qj}{m-1}.
\end{equation}

In this note, we will present an easy-to-obtain expression for $\bar e^{\AA}_r(d,m)$ involving a certain representation of the number $\rho_q(d,m)-r$ that we introduce in the next section.

\section{The $d$-th Macaulay representation with respect to $q$}

Let $d$ be a positive integer. The $d$-th Macaulay (or $d$-binomial) representation, of a nonnegative integer $N$ is a way to write $N$ as sum as certain binomial coefficients. To be precise
\begin{equation*}
N=\sum_{i=1}^d \binom{s_i}{i},
\end{equation*}
where the $s_i$ integers satisfying $s_d>s_{d-1}> \cdots > s_1 \ge 0$. The usual convention that $\binom{a}{b}=0$ if $a<b$, is used. For example, the $d$-th Macaulay representation of $0$ is given by $0=\sum_{i=1}^d \binom{i-1}{i}.$ Given $d$ and $N$ the integers $s_i$ exist and are unique. The Macaulay representation is among other things used for the study of Hilbert functions of graded modules, see for example \cite{green}. It is well known (see for example \cite{green}) that if $N$ and $M$ are two nonnegative integers with Macaulay representations given by $(k_d,\dots,k_1)$ and $(\ell_d,\dots,\ell_1)$ then $N \le M$ if and only if $(k_d,\dots,k_1) \preccurlyeq (\ell_d,\dots,\ell_1)$, where $\preccurlyeq$ denotes the lexicographic order.

For our purposes it is more convenient to define $m_i:=s_i-i$. We then obtain
\begin{equation}\label{eq_Macaulay}
N=\sum_{i=1}^d \binom{m_i+i}{i},
\end{equation}
where $m_i$ are integers satisfying $m_d \ge m_{d-1} \ge \cdots \ge m_1 \ge -1.$ The reason for this is that for $d \le q-1$ we have $\rho_q(d,m)=\binom{m+d}{d}$. Therefore, we can interpret Equation \eqref{eq_Macaulay} as a statement concerning dimensions of the Reed--Muller codes $\RM(i,m_i)$. For a suitable choice of $N$, it turns out that the $m_i$ completely determine the value of $\bar e^{\AA}_r(d,m)$ if $d \le q-1$. For $d \ge q$, even though the dimension $\rho_q(d,m)$ is not longer given by $\binom{m+d}{d}$, there exists a variant of the usual $d$-th Macaulay representation that turns out to be equally meaningful for Reed--Muller codes. Before stating this representation, we give a lemma.
\begin{lemma}\label{lem:help}
Let $m \ge 1$ be an integer. We have $$\rho_q(d,m) = \sum_{i=0}^{\min\{d,q-1\}}\rho_q(d-i,m-1).$$
\end{lemma}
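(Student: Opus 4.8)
The plan is to work directly with the identification $\rho_q(d,m) = \dim T^{\red}_{\le d}(m)$ from \eqref{eq:dimRM} and exhibit a convenient basis. First I would recall that the reduced monomials $x_1^{\alpha_1}\cdots x_m^{\alpha_m}$ with $(\alpha_1,\dots,\alpha_m)\in\{0,1,\dots,q-1\}^m$ form an $\fq$-basis of $T^{\red}(m)$, and that those among them having total degree at most $d$ form an $\fq$-basis of $T^{\red}_{\le d}(m)$ (any reduced polynomial of degree at most $d$ is by definition an $\fq$-linear combination of reduced monomials, and omitting a monomial from such a combination cannot increase the total degree). Hence $\rho_q(d,m)$ equals the number $B(d,m)$ of tuples $(\alpha_1,\dots,\alpha_m)\in\{0,\dots,q-1\}^m$ with $\alpha_1+\cdots+\alpha_m\le d$, and likewise $\rho_q(e,m-1)=B(e,m-1)$.

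Next I would partition this set of monomials according to the exponent $\alpha_m$ of the last variable. For a fixed value $\alpha_m=i$ with $0\le i\le q-1$, deleting the factor $x_m^i$ sets up a bijection between the reduced monomials of total degree at most $d$ in $x_1,\dots,x_m$ having $\alpha_m=i$ and the reduced monomials of total degree at most $d-i$ in $x_1,\dots,x_{m-1}$. Summing over the $q$ possible values of $i$ yields $B(d,m)=\sum_{i=0}^{q-1}B(d-i,m-1)$, where $B(e,m-1)=0$ for $e<0$ since there are no monomials of negative degree; this matches the convention $\rho_q(e,m-1)=0$ for $e<0$. The summands with $i>d$ vanish, so the sum may be truncated at $i=\min\{d,q-1\}$, giving $\rho_q(d,m)=\sum_{i=0}^{\min\{d,q-1\}}\rho_q(d-i,m-1)$ as claimed.

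The argument is essentially bookkeeping, so I do not expect a genuine obstacle; the only points requiring a little care are justifying that the reduced monomials of total degree at most $d$ really do span $T^{\red}_{\le d}(m)$ (handled above), and correctly treating the edge cases $d<0$ and $d\ge q$ so that the truncation at $\min\{d,q-1\}$ comes out exactly right. As a consistency check one can take $m=1$: then the right-hand side is $\sum_{i=0}^{\min\{d,q-1\}}\rho_q(d-i,0)$, and since $\rho_q(e,0)$ equals $1$ for $e\ge 0$ and $0$ otherwise, this evaluates to $\min\{d+1,q\}$, which is indeed $\dim T^{\red}_{\le d}(1)$. An alternative route would be to derive the recursion purely combinatorially from the closed formula \eqref{eq:dimRM}, but the basis-counting proof sketched above is shorter and more transparent.
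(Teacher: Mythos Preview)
Your argument is correct and is essentially the same as the paper's: the paper writes the direct sum decomposition $T^{\red}_{\le d}(m)=\sum_{i=0}^{\min\{d,q-1\}} X_m^{\,i}\,T^{\red}_{\le d-i}(m-1)$ and reads off dimensions, which is exactly your partition of the reduced monomial basis by the exponent of $x_m$ followed by the bijection obtained by deleting the factor $x_m^i$. The only cosmetic difference is that the paper states the decomposition at the level of vector spaces rather than counting basis elements.
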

\begin{proof}
Any polynomial $f \in T(m)$ can be seen as a polynomial in the variable $X_m$ with coefficients in $T(m-1)$. This implies that $T(m) = \sum_{i \ge 0} X_m^i T(m)$, where the sum is a direct sum. Similarly we can write $$T^\red_{\le d}(m)=\sum_{i=0}^{\min\{d,q-1\}}X_m^iT^\red_{\le d-i}(m-1).$$ The result now follows.
\end{proof}
A consequence of this lemma is the following.
\begin{corollary}\label{cor:help}
Let $d=a(q-1)+b$ for integers $a$ and $b$ satisfying $a \ge 0$ and $1 \le b \le q-1.$ Further suppose that $m \ge a$. Then
$$\rho_q(d,m)-1=\sum_{j=0}^{a-1} \sum_{\ell=0}^{q-2} \rho_q(d-j(q-1)-\ell,m-j-1) + \sum_{i=1}^b\rho_q(i,m-a-1).$$
\end{corollary}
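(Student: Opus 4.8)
The plan is to prove the identity by iterating Lemma~\ref{lem:help}, peeling off one variable at a time, and then transferring a single constant term to the left-hand side at the very end.

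First I would isolate the shape of one iteration step in the range that actually occurs. For $0 \le j \le a-1$ write $d_j := d - j(q-1) = (a-j)(q-1) + b$. Since $b \ge 1$ we have $d_j \ge (q-1)+b \ge q$, so $\min\{d_j,\,q-1\} = q-1$; and since $m \ge a$ we have $m-j \ge 1$, so Lemma~\ref{lem:help} applies with $(d,m)$ replaced by $(d_j, m-j)$. Splitting off its top summand $i = q-1$ and using $d_j - (q-1) = d_{j+1}$ gives
\[
\rho_q(d_j,\, m-j) = \sum_{\ell=0}^{q-2} \rho_q(d_j - \ell,\, m-j-1) + \rho_q(d_{j+1},\, m-j-1).
\]
Next I would chain these identities for $j = 0, 1, \dots, a-1$: starting from $\rho_q(d,m) = \rho_q(d_0, m)$, expand the leftover term $\rho_q(d_{j+1}, m-j-1)$ at each stage using the next identity. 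After $a$ steps the leftover term is $\rho_q(d_a, m-a) = \rho_q(b, m-a)$, and one is left with
\[
\rho_q(d,m) = \sum_{j=0}^{a-1} \sum_{\ell=0}^{q-2} \rho_q\bigl(d - j(q-1) - \ell,\, m-j-1\bigr) + \rho_q(b, m-a).
\]

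It then remains to show $\rho_q(b, m-a) = 1 + \sum_{i=1}^{b} \rho_q(i, m-a-1)$. If $m \ge a+1$, I would apply Lemma~\ref{lem:help} once more to $\rho_q(b, m-a)$: since $1 \le b \le q-1$ the upper limit there is $b$, and separating off the $i=0$ term — which equals $\rho_q(0, m-a-1) = \binom{m-a-1}{0} = 1$ — yields exactly this identity. The boundary case $m = a$ is handled directly: $\rho_q(b,0) = \dim T^\red_{\le b}(0) = 1$, while $\rho_q(i,-1) = 0$ for all $i$ (consistent with the empty outer sum in \eqref{eq:dimRM}), so the identity holds there as well. Substituting this into the previous display and moving the term $1$ to the left-hand side gives the Corollary.

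I expect the only delicate points to be the bookkeeping in the iteration — verifying that $\min\{d - j(q-1),\, q-1\} = q-1$ persists for every $j \le a-1$ (the tight case is $j = a-1$, where $b \ge 1$ is used) and that $m - j \ge 1$ throughout (this is precisely where the hypothesis $m \ge a$ enters) — together with the degenerate case $m = a$ noted above. None of this is hard; all of the content sits in Lemma~\ref{lem:help}.
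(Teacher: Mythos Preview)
Your argument is correct and is essentially the paper's own proof: both iterate Lemma~\ref{lem:help} to telescope the double sum down to $\rho_q(d,m)-\rho_q(b,m-a)$, and then handle the remaining $\rho_q(b,m-a)$ by one more application of the lemma (with the boundary case $m=a$ treated separately). The only slip is cosmetic: in justifying $\rho_q(i,-1)=0$ via \eqref{eq:dimRM}, it is the \emph{inner} sum $\sum_{j=0}^{m}$ that is empty for $m=-1$, not the outer one.
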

\begin{proof}
This follows using Lemma \ref{lem:help} repeatedly. First applying the lemma to each sum within the double summation on the right-hand side, we see that
\begin{multline*}
\sum_{j=0}^{a-1} \sum_{\ell=0}^{q-2} \rho_q(d-j(q-1)-\ell,m-j-1) = \\
\sum_{j=0}^{a-1} \left( \rho_q(d-j(q-1),m-j)-\rho_q(d-(j+1)(q-1),m-j-1) \right) = \\
\rho_q(d,m)-\rho_q(d-a(q-1),m-a)  = \rho_q(d,m)-\rho_q(b,m-a).
\end{multline*}
Using the same lemma to rewrite the single summation on the right-hand side in Equation \eqref{eq:uniqueness2} we see that if $m>a$
$$\sum_{i=1}^b\rho_q(i,m-a-1)=\rho_q(b,m-a)-\rho_q(0,m-a-1)=\rho_q(b,m-a)-1,$$
while if $m=a$, the single summation equals $0$ and the double summation simplifies to $\rho_q(d,m)-1$.
In either case, we obtain the desired result
\end{proof}
We can now show the following.
\begin{theorem}\label{thm:genrepMac}
Let $N \ge 0$ and $d \ge 1$ be integers and $q$ a prime power. Then there exist uniquely determined integers $m_1,\dots,m_d$ satisfying
\begin{enumerate}
\item $N=\sum_{i=1}^d \rho_q(i,m_i),$
\item $-1 \le m_1 \le \cdots \le m_d,$
\item for all $i$ satisfying $1 \le i \le d-q+1$, either $m_{i+q-1} > m_i$ or $m_{i+q-1}=m_i=-1$.
\end{enumerate}
\end{theorem}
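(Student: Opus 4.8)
The plan is to prove existence and uniqueness separately, in both cases by induction, using Corollary~\ref{cor:help} as the engine that produces the ``canonical'' representation of $\rho_q(d,m)-1$.

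For \textbf{existence}, I would induct on $d$. The base case $d=1$ is immediate: here $\rho_q(1,m)=m+1$, so given $N\ge0$ we set $m_1:=N-1\ge-1$, and condition~(3) is vacuous. For the inductive step, given $N$ and $d\ge2$, I would first choose $m_d$ to be the largest integer $m\ge-1$ with $\rho_q(d,m)\le N$ (this is well-defined since $\rho_q(d,-1)=0\le N$ and $\rho_q(d,m)\to\infty$). Write $d=a(q-1)+b$ with $a\ge0$, $1\le b\le q-1$. The key estimate is that $N':=N-\rho_q(d,m_d)$ satisfies $N'<\rho_q(d,m_d+1)-\rho_q(d,m_d)$, and by the identity $\rho_q(d,m+1)-\rho_q(d,m)=\rho_q(d-1,m)+\rho_q(d-2,m)+\cdots$ coming from Lemma~\ref{lem:help} (applied in the variable that was just incremented), this bound is exactly $\sum_{i=1}^{d-1}\rho_q(i, ?)$ for an appropriate decreasing pattern of second arguments dictated by Corollary~\ref{cor:help}. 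Applying the inductive hypothesis to $N'$ with parameter $d-1$ then gives $m_1,\dots,m_{d-1}$; the content of the argument is checking that the bound on $N'$ forces $m_{d-1}\le m_d$ and that the ``$q-1$ apart'' condition~(3) at the top, i.e. the constraint relating $m_{d}$ to $m_{d-q+1}$, is satisfied. This last point is precisely where Corollary~\ref{cor:help} is needed: it identifies the threshold value $\rho_q(d,m_d+1)-\rho_q(d,m_d)$ with a sum of $\rho_q(i,\cdot)$ whose second-argument pattern drops by one exactly every $q-1$ steps, which is the combinatorial shape encoded by condition~(3).

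For \textbf{uniqueness}, suppose $(m_1,\dots,m_d)$ and $(m_1',\dots,m_d')$ both satisfy (1)--(3) for the same $N$. I would first show $m_d=m_d'$. If, say, $m_d>m_d'$, then conditions (2)--(3) bound $\sum_{i=1}^d\rho_q(i,m_i')$ from above: each $m_i'\le m_d'$, and condition~(3) forces $m_i'\le m_d'-1$ for $i\le d-q+1$, $m_i'\le m_d'-2$ for $i\le d-2(q-1)$, and so on. Summing these up and invoking Corollary~\ref{cor:help} (with $m=m_d'+$ something, matching $a=\lfloor(d-1)/(q-1)\rfloor$-type bookkeeping), the total is at most $\rho_q(d,m_d'+1)-1<\rho_q(d,m_d)\le N$, a contradiction. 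Hence $m_d=m_d'$; subtracting $\rho_q(d,m_d)$ and applying induction on $d$ (noting that conditions (2)--(3) for the truncated tuple $(m_1,\dots,m_{d-1})$ follow from those for the full tuple, with the new top condition being a consequence of the old one) finishes the argument.

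The main obstacle I anticipate is the bookkeeping in the uniqueness step: one must show that the \emph{maximal} value of $\sum_{i=1}^d\rho_q(i,m_i)$ subject to $m_d\le M$ together with condition~(3) equals exactly $\rho_q(d,M+1)-1$, and this is where the precise form of Corollary~\ref{cor:help} matters. One has to match the summation pattern in that corollary --- blocks of $q-1$ consecutive degrees at each ``level'' $m-j-1$ --- against the staircase forced by condition~(3), and verify the edge cases where $m_i=-1$ (so $\rho_q(i,-1)=0$) interact correctly with the clause ``$m_{i+q-1}=m_i=-1$'' in~(3). Once the extremal identity $\max\sum\rho_q(i,m_i)=\rho_q(d,M+1)-1$ is established, both existence (greedy choice works) and uniqueness (the greedy choice is forced) fall out cleanly.
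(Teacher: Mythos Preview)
Your plan is correct and essentially matches the paper's: existence by induction on $d$ via the greedy choice of $m_d$, and uniqueness via the extremal bound $\sum_{i\le d}\rho_q(i,m_i)\le\rho_q(d,m_d+1)-1$ forced by conditions (2)--(3). One small correction: in the existence step the relevant identity is Lemma~\ref{lem:help}, which gives $\rho_q(d,m_d+1)-\rho_q(d,m_d)=\sum_{i=1}^{\min(d,q-1)}\rho_q(d-i,m_d)$ with \emph{constant} second argument $m_d$ (not the staircase of Corollary~\ref{cor:help}); the staircase pattern is invoked only in the uniqueness half, where the paper argues by a single direct contradiction rather than by induction on $d$.
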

\begin{proof}
We start by showing uniqueness. Suppose that
\begin{equation}\label{eq:uniqueness}
N=\sum_{i=1}^d \rho_q(i,m_i)=\sum_{i=1}^d \rho_q(i,n_i)
\end{equation}
and the integers $n_1,\dots,n_d$ and $m_1,\dots m_d$ satisfy the conditions from the theorem. First of all, if $m_d=-1$ or $n_d=-1$ then $N=0$. Either assumption implies that $(m_d,\dots,m_1)=(-1,\dots,-1)=(n_d,\dots,n_1)$. Indeed $n_i\ge 0$ or $m_i \ge 0$ for some $i$ directly implies that $N>0$. Therefore we from now on assume that $m_d\ge 0$ and $n_d \ge 0$. To arrive at a contradiction, we may assume without loss of generality that $n_d \le m_d-1$.

Define $e$ to be the smallest integer such that $n_e \ge 0$. Equation \eqref{eq:uniqueness} can then be rewritten as
\begin{equation}\label{eq:uniqueness1}
N=\sum_{i=1}^d \rho_q(i,m_i)=\sum_{i=e}^d \rho_q(i,n_i)
\end{equation}
Condition 3 from the theorem implies that $n_{i-q+1}<n_i$ for all $i$ satisfying $e \le i \le d $.
Now write $d-e+1=a(q-1)+b$ for integers $a$ and $b$ satisfying $a \ge 0$ and $1 \le b \le q-1$. With this notation, we obtain that for any $0 \le j \le a-1$ and $0 \le \ell \le q-2$ we have that $$n_{d-j(q-1)-\ell} \le n_d-j \le m_d-j-1.$$

In particular choosing $j=a-1$ and $\ell=0$, this implies that $m_d \ge a+n_{q-1+b} \ge a+1+n_b \ge a$. Using these observations, we obtain from Equation \eqref{eq:uniqueness} that
\begin{equation}\label{eq:uniqueness2}
\rho_q(d,m_d) \le N = \sum_{i=e}^d \rho_q(i,n_i) \le \sum_{j=0}^{a-1} \sum_{\ell=0}^{q-2} \rho_q(d-j(q-1)-\ell,m_d-j-1) + \sum_{i=1}^b\rho_q(e+i-1,m_d-a-1).
\end{equation}
Applying the same technique as in the proof of Corollary \ref{cor:help}, we derive that $$\sum_{j=0}^{a-1} \sum_{\ell=0}^{q-2} \rho_q(d-j(q-1)-\ell,m_d-j-1)=\rho_q(d,m_d)-\rho_q(b+e-1,m_d-a)$$
and Equation \eqref{eq:uniqueness2} can be simplified to
\begin{equation}\label{eq:uniqueness3}
\rho_q(d,m_d) \le \rho_q(d,m_d)-\rho_q(b+e-1,m_d-a) + \sum_{i=1}^b\rho_q(e+i-1,m_d-a-1).
\end{equation}
For $m_d=a$ the right-hand side equals $\rho_q(d,m_d)-1$, leading to a contradiction. If $m_d>q$, Equation \eqref{eq:uniqueness3}
implies
$$
\begin{array}{rcl}
\rho_q(b+e-1,m_d-a) &\le & \sum_{i=1}^b\rho_q(e+i-1,m_d-a-1)\\
\\
 &= & \sum_{j=0}^{b-1}\rho_q(e+b-1-j,m_d-a-1)\\
\\
 &< & \sum_{j=0}^{\min\{e+b-1,q-1\}}\rho_q(e+b-1-j,m_d-a-1)
\\
 & = & \rho_q(b+e-1,m_d-a),
\end{array}
$$
where in the last equality we used Lemma \ref{lem:help}. Again we arrive at a contradiction. This completes the  proof of uniqueness of the $d$-th Macaulay representation with respect to $q$.

Now we show existence. Let $d$, $N$ and $q$ be given. We will proceed with induction on $d$. For $d=1$, note that $\rho_q(1,m)=m+1$ for any $m \ge -1$. Therefore, for a given $N \ge 0$, we can write $N=\rho_q(1,N-1)$.

Now assume the theorem for $d-1$. There exists $m_d \ge -1$ such that
\begin{equation}\label{eq:existence}
\rho_q(d,m_d) \le N < \rho_q(d,m_d+1).
\end{equation}
Applying the induction hypothesis on $N-\rho_q(d,m_d)$, we can find $m_{d-1},\dots,m_1$ satisfying the conditions of the theorem for $d-1$. In particular we have that
\begin{enumerate}
\item $N-\rho_q(d,m_d)=\sum_{i=1}^{d-1} \rho_q(i,m_i),$
\item $-1 \le m_1 \le \cdots \le m_{d-1},$
\item $m_{i+(q-1)} > m_i$ for all $1 \le i \le d-q.$
\end{enumerate}
Clearly this implies that $N=\sum_{i=1}^{d} \rho_q(i,m_i),$ but it is not clear a priori that $m_1,\dots,m_d$ satisfy conditions 2 and 3 as well. Conditions 2 and 3 would follow once we show that $m_d \ge m_{d-1}$ and either $m_d > m_{d-q+1}$ or $m_d=m_{d-q+1}=-1$. First of all, if $m_d=-1$, then $N=0$ and $(m_d,\dots,m_1)=(-1,\dots,-1)$. Hence there is nothing to prove in that case. Assume $m_d \ge 0$. From Equation \eqref{eq:existence} and Lemma \ref{lem:help} we see that
\begin{equation}\label{eq:existence1}
N-\rho_q(d,m_d)<\rho_q(d,m_d+1)-\rho_q(d,m_d)=\sum_{i=1}^{\min\{d,q-1\}}\rho_q(d-i,m_d).
\end{equation}
First suppose that $d \le q-1$. First of all, Condition 3 is empty in that setting. Further, Equation \eqref{eq:existence1} implies
$$N-\rho_q(d,m_d) < \sum_{i=1}^{d}\rho_q(d-i,m_d) = \sum_{i=1}^{d-1}\rho_q(d-i,m_d) +1$$
and hence
$$N-\rho_q(d,m_d) \le \sum_{i=1}^{d-1}\rho_q(d-i,m_d)= \sum_{j=0}^{d-2}\rho_q(d-1-j,m_d) < \rho_q(d-1,m_d+1).$$
This shows that $m_{d-1} \le m_d$ as desired.

Now suppose that $d \ge q$. In this situation Equation \eqref{eq:existence1} implies
$$N-\rho_q(d,m_d) < \sum_{i=1}^{q-1}\rho_q(d-i,m_d) = \sum_{j=0}^{q-2}\rho_q(d-1-j,m_d)< \rho_q(d-1,m_d+1).$$
Hence $m_{d-1} \le m_d$ as before. Finally assume that $m_d \le m_{d-q+1}$. Then by the previous and Condition 2, we have $m_{d}= m_{d-1}=\cdots = m_{d-q+1}$. Hence $N \ge \sum_{i=0}^{q-1}\rho_q(d-i,m_d)=\rho_q(d,m_d+1)$ which is in contradiction with Equation \eqref{eq:existence}. This concludes the induction step and hence the proof of existence.
\end{proof}

We call the representation of $N$ in the above theorem the $d$-th Macaulay representation of $N$ with respect to $q$. One retrieves the usual $d$-th Macaulay representation letting $q$ tend to infinity. We refer to $(m_d,\dots,m_1)$ as the coefficient tuple of this representation. A direct corollary of the above is the following.
\begin{corollary}\label{cor:greedy}
The coefficient tuple $(m_d,\dots,m_1)$ of the $d$-th Macaulay representation with respect to $q$ of a nonnegative integer $N$ can be computed using the following greedy algorithm: The coefficient $m_{d-i}$ can be computed recursively (starting with $i=0$) as the unique integer $m_{d-i} \ge -1$ such that
$$\rho_q(d-i,m_{d-i}) \le N-\sum_{j=d-i+1}^{d}\rho_q(j,m_j) <\rho_q(d-i,m_{d-i}+1).$$
\end{corollary}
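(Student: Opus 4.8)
The plan is to recognize the greedy procedure as nothing but the recursion already performed in the existence part of the proof of Theorem~\ref{thm:genrepMac}, and then to quote the uniqueness part of that theorem. I would argue by induction on $d$. First, observe that each greedy step is well defined: since $\rho_q(d,0)=1>0=\rho_q(d,-1)$ and, for $m\ge0$, $\rho_q(d,m+1)-\rho_q(d,m)=\sum_{i=1}^{\min\{d,q-1\}}\rho_q(d-i,m)\ge1$ by Lemma~\ref{lem:help}, the function $m\mapsto\rho_q(d,m)$ is strictly increasing on $\{-1,0,1,\dots\}$ and unbounded, so for every nonnegative residual there is exactly one integer $\ge-1$ satisfying the corresponding sandwiching inequality. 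The base case $d=1$ is immediate from $\rho_q(1,m)=m+1$, which forces the greedy choice to be $m_1=N-1$, the $1$-st Macaulay representation of $N$.

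For the inductive step, let $m_d$ be the integer chosen by the first greedy step, so $\rho_q(d,m_d)\le N<\rho_q(d,m_d+1)$, and put $N':=N-\rho_q(d,m_d)\ge0$. By the induction hypothesis applied at level $d-1$, the greedy algorithm run on $N'$ returns the coefficient tuple $(m_{d-1},\dots,m_1)$ of the $(d-1)$-st Macaulay representation of $N'$ with respect to $q$; in particular $N'=\sum_{i=1}^{d-1}\rho_q(i,m_i)$, and $(m_{d-1},\dots,m_1)$ already satisfies conditions~2 and~3 of Theorem~\ref{thm:genrepMac} at level $d-1$. Hence $(m_d,\dots,m_1)$ satisfies condition~1 of Theorem~\ref{thm:genrepMac} at level $d$, and to finish it suffices to verify the two extra inequalities $m_{d-1}\le m_d$ and ``$m_d>m_{d-q+1}$ or $m_d=m_{d-q+1}=-1$''. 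Once these hold, the uniqueness part of Theorem~\ref{thm:genrepMac} identifies $(m_d,\dots,m_1)$ with \emph{the} $d$-th Macaulay representation of $N$ with respect to $q$, and the corollary follows modulo the elementary identity $N-\sum_{j=d-i+1}^{d}\rho_q(j,m_j)=N'-\sum_{j=d-i+1}^{d-1}\rho_q(j,m_j)$ that reconciles the two phrasings of the recursion.

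The two extra inequalities are where the hypothesis $N<\rho_q(d,m_d+1)$ enters, and their verification is the computation already carried out at the end of the existence argument in the proof of Theorem~\ref{thm:genrepMac}: from $N<\rho_q(d,m_d+1)$ and Lemma~\ref{lem:help} one gets $N'<\sum_{i=1}^{\min\{d,q-1\}}\rho_q(d-i,m_d)$, from which $N'<\rho_q(d-1,m_d+1)$ and hence $m_{d-1}\le m_d$; and if $m_d=m_{d-q+1}\ge0$, then condition~2 yields $m_{d-q+1}=\dots=m_d$, so $N'\ge\sum_{i=1}^{q-1}\rho_q(d-i,m_d)=\rho_q(d,m_d+1)-\rho_q(d,m_d)$, contradicting $N<\rho_q(d,m_d+1)$. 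I do not expect a genuine obstacle here: everything of substance is contained in Theorem~\ref{thm:genrepMac}, and the only point requiring care is to phrase the induction so that the recursive call in the existence construction is matched with a run of the greedy algorithm on $N'$, rather than re-proving Theorem~\ref{thm:genrepMac}.
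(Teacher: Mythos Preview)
Your proposal is correct and follows exactly the same approach as the paper: the paper's proof consists of a single sentence observing that the greedy algorithm is precisely the recursive construction carried out in the existence part of Theorem~\ref{thm:genrepMac}. You have simply unpacked that sentence, adding the (useful) observation that $m\mapsto\rho_q(d,m)$ is strictly increasing so that each greedy step is well defined, and making explicit the appeal to uniqueness at the end.
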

\begin{proof}
From the existence-part of the proof of Theorem \ref{thm:genrepMac} it follows directly that the given greedy algorithm finds the desired coefficients.
\end{proof}

A further corollary is the following. As before $\preceq$ denotes the lexicographic order.
\begin{corollary}\label{cor:lexorder}
Suppose the $N$ and $M$ are two nonnegative integers whose respective coefficient tuples are $(n_d,\dots,n_1)$ and $(m_d,\dots,m_1)$. Then
$$N \le M \ \makebox{if and only if} \ (n_d,\dots,n_1) \preceq (m_d,\dots,m_1).$$
\end{corollary}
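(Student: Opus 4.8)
The plan is to deduce Corollary~\ref{cor:lexorder} from the greedy characterization in Corollary~\ref{cor:greedy}, using the monotonicity of $\rho_q$ in each argument. First I would record the basic fact that $\rho_q(i,m)$ is (weakly) increasing in $m$ for fixed $i\ge 1$, and strictly increasing for $m\ge -1$ in the sense that $\rho_q(i,m)<\rho_q(i,m+1)$ (this is immediate from Lemma~\ref{lem:help}, since the difference is a sum of nonnegative terms one of which, $\rho_q(0,m)=1$, is positive). This is what makes the greedy algorithm well-defined and is the engine of the whole argument.

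For the forward direction, I would argue by contraposition: suppose $(n_d,\dots,n_1)\succ(m_d,\dots,m_1)$ in lexicographic order, and let $k$ be the largest index with $n_k\ne m_k$, so $n_k\ge m_k+1$ and $n_j=m_j$ for all $j>k$. By Corollary~\ref{cor:greedy}, writing $S:=\sum_{j=k+1}^{d}\rho_q(j,n_j)=\sum_{j=k+1}^{d}\rho_q(j,m_j)$, we have $N\ge S+\rho_q(k,n_k)\ge S+\rho_q(k,m_k+1)$, whereas $M<S+\rho_q(k,m_k+1)$, again by the defining inequality of the greedy step for $M$ at index $k$. Hence $N>M$, which is the desired contrapositive. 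The converse direction is symmetric: if $(n_d,\dots,n_1)\preceq(m_d,\dots,m_1)$, then either the tuples are equal (whence $N=M$) or the above argument with the roles of $N$ and $M$ swapped gives $M>N$, hence $N\le M$. Combining the two directions yields the biconditional. One should also check the degenerate boundary case where one of the tuples is $(-1,\dots,-1)$, i.e. $N=0$ or $M=0$; but here $(-1,\dots,-1)$ is lexicographically minimal and $0$ is the minimal value, so the equivalence holds trivially.

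I do not expect a genuine obstacle here, since the statement is essentially a formal consequence of the greedy/recursive description together with the ranges $\rho_q(k,m_k)\le \text{(residual)}<\rho_q(k,m_k+1)$ — exactly the structure that makes lexicographic comparison agree with numerical comparison for any ``positional'' representation. The only point requiring a little care is the bookkeeping at the first index of disagreement: one must invoke Corollary~\ref{cor:greedy} \emph{at that index} for both $N$ and $M$ simultaneously, using that the higher-index coefficients agree so that the residuals $N-\sum_{j>k}\rho_q(j,n_j)$ and $M-\sum_{j>k}\rho_q(j,m_j)$ are compared against the same threshold $\rho_q(k,m_k+1)$. This is the step I would write out in full; the rest is routine.
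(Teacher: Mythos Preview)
Your proposal is correct and follows essentially the same idea as the paper: both arguments use the greedy bounds from Corollary~\ref{cor:greedy} at the first index of disagreement to compare $N$ and $M$. The only difference is organizational---you prove the single strict implication ``strict lex $\Rightarrow$ strict numerical'' at the largest index $k$ of disagreement and then get both directions by symmetry, whereas the paper treats the two directions separately (one by reducing to $n_d<m_d$, the other by induction on $d$); the underlying mechanism is identical.
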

\begin{proof}
Assume $(n_d,\dots,n_1) \preceq (m_d,\dots,m_1).$ It is enough to show the corollary in case $n_d < m_d$. We know from the previous corollary that $n_d$ and $m_d$ may be determined using the given greedy algorithm. In particular this implies that $n_d < m_d$ implies
$$N < \rho_q(d,n_d+1) \le \rho_q(d,m_d) \le M.$$

Assume that $N \le M$. We use induction on $d$. The induction basis is trivial: If $d=1$, then $m_1=M-1$ and $n_1=N-1$. For the induction step, note that $N \le M < \rho_q(d,m_d+1)$ implies by the greedy algorithm that $n_d \le m_d$. If $n_d < m_d$, we are done. If $n_d=m_d$, we replace $N$ with $N-\rho_q(d,m_d)$ and $M$ with $M-\rho_q(d,m_d)$  and use the induction hypothesis to conclude that $(n_d,\dots,n_1) \preceq (m_d,\dots,m_1)$.
\end{proof}

\section{A simple expression for $\bar e^{\AA}_r(d,m)$}

We are now ready to state and prove the relation between the Macaulay representation with respect to $q$ and $\bar e^{\AA}_r(d,m)$.
\begin{theorem}\label{thm:main}
For $1 \le r \le \rho_q(d,m)$, let the $d$-th Macaulay representation of $\rho_q(d,m)-r$ with respect to $q$ be given by
$$\rho_q(d,m)-r=\sum_{i=1}^d \rho_q(i,m_i).$$
Denoting the floor function as $\lfloor \cdot \rfloor$, we have
$$\bar e^{\AA}_r(d,m)=\sum_{i=1}^d \lfloor q^{m_i} \rfloor.$$
\end{theorem}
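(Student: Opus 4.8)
The plan is to derive the formula directly from the Heijnen--Pellikaan description \eqref{Hrdm}, so the whole argument amounts to translating the lexicographic combinatorics of the tuple $(\mu_1,\dots,\mu_m)$ into a statement about the $d$-th Macaulay representation with respect to $q$. Write $N:=\rho_q(d,m)-r$ and, for $1\le j\le m+1$, put $e_j:=d-(\mu_1+\cdots+\mu_{j-1})$, so that $e_1=d$, $e_{j+1}=e_j-\mu_j$, and $e_{m+1}=d-\sum_{j=1}^m\mu_j\ge 0$. Since the number of tuples $(\beta_1,\dots,\beta_m)\in\{0,\dots,q-1\}^m$ with $\beta_1+\cdots+\beta_m\le d$ is exactly $\rho_q(d,m)$, and $(\mu_1,\dots,\mu_m)$ is the $r$-th largest of them, the integer $N$ equals the number of such tuples that are strictly smaller than $(\mu_1,\dots,\mu_m)$ in lexicographic order. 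Splitting according to the first coordinate $j$ in which $\beta$ differs from $\mu$ (then $\beta_1=\mu_1,\dots,\beta_{j-1}=\mu_{j-1}$ are forced, $\beta_j\in\{0,\dots,\mu_j-1\}$, and $(\beta_{j+1},\dots,\beta_m)$ ranges freely over tuples with sum at most $e_j-\beta_j$), I get
$$N=\sum_{j=1}^m\sum_{b=0}^{\mu_j-1}\rho_q(e_j-b,\,m-j)=\sum_{j=1}^m\ \sum_{i=e_{j+1}+1}^{e_j}\rho_q(i,\,m-j),$$
the second equality by the substitution $i=e_j-b$; here one uses $e_j\ge\mu_j$ (which follows from $e_j-\mu_j=e_{j+1}\ge e_{m+1}\ge 0$) and the conventions $\rho_q(k,0)=1$ for $k\ge 0$.

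Next I would reorganize this as a single sum over the degree index. Since $e_1\ge e_2\ge\cdots\ge e_{m+1}\ge0$, for every $i$ with $e_{m+1}<i\le d$ there is a unique $j\in\{1,\dots,m\}$ with $e_{j+1}<i\le e_j$; define $m_i:=m-j$ for this $j$, and set $m_i:=-1$ when $1\le i\le e_{m+1}$. Using $\rho_q(i,-1)=0$ one obtains $N=\sum_{i=1}^d\rho_q(i,m_i)$, and $-1\le m_1\le\cdots\le m_d$ since the relevant $j$ is weakly decreasing in $i$. The crucial point is condition~3 of Theorem~\ref{thm:genrepMac}: for $1\le i\le d-q+1$ with $e_{m+1}<i$, writing $e_{j+1}<i\le e_j$ we have $e_j-i<e_j-e_{j+1}=\mu_j\le q-1$, hence $e_j\le i+q-2<i+q-1$; as $e$ is weakly decreasing this forces the index $j'$ with $e_{j'+1}<i+q-1\le e_{j'}$ to satisfy $j'<j$, i.e.\ $m_{i+q-1}=m-j'>m-j=m_i$. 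If instead $i+q-1\le e_{m+1}$, then $m_{i+q-1}=m_i=-1$, and if $i\le e_{m+1}<i+q-1$ then $m_{i+q-1}\ge0>-1=m_i$. Thus $(m_d,\dots,m_1)$ is the coefficient tuple of a $d$-th Macaulay representation of $N=\rho_q(d,m)-r$ with respect to $q$, and by the uniqueness part of Theorem~\ref{thm:genrepMac} it coincides with the one appearing in the statement.

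Finally I would evaluate $\sum_{i=1}^d\lfloor q^{m_i}\rfloor$. The indices $i$ with $m_i=-1$ contribute $\lfloor q^{-1}\rfloor=0$. For each $j\in\{1,\dots,m-1\}$ there are exactly $e_j-e_{j+1}=\mu_j$ indices $i$ with $m_i=m-j$, each contributing $q^{m-j}$; and there are exactly $e_m-e_{m+1}=\mu_m$ indices $i$ with $m_i=0$, each contributing $q^0$. Hence $\sum_{i=1}^d\lfloor q^{m_i}\rfloor=\sum_{j=1}^m\mu_j q^{m-j}$, which equals $\bar e^{\AA}_r(d,m)$ by \eqref{Hrdm}; this is the asserted identity.

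The step I expect to be most delicate is the middle one: verifying that the candidate tuple $(m_d,\dots,m_1)$ really satisfies condition~3 of Theorem~\ref{thm:genrepMac}, and keeping the boundary cases straight --- the initial block $i\le e_{m+1}$ on which $m_i=-1$, the coordinates $j$ with $\mu_j=0$ (empty inner sums), and the top block $j=m$ where $\rho_q(k,0)=1$ for $k\ge 0$ and the contribution is $q^0$. Once the candidate tuple is known to be a legitimate $d$-th Macaulay representation with respect to $q$, uniqueness identifies it with the $m_i$ of the statement and the concluding summation is routine.
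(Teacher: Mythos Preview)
Your proof is correct. Both your argument and the paper's rest on the same bijection between coefficient tuples $(m_d,\dots,m_1)$ and admissible $m$-tuples $(\mu_1,\dots,\mu_m)$, and the concluding computation $\sum_i\lfloor q^{m_i}\rfloor=\sum_j\mu_j q^{m-j}$ is identical in both. The genuine difference is in how the link with $r$ is made. The paper runs the bijection in the direction $m\to\mu$ (setting $\mu_i=|\{j:m_j=m-i\}|$), argues it is a bijection onto the admissible $m$-tuples, and then invokes the order-preservation statement of Corollary~\ref{cor:lexorder} to conclude that the tuple attached to $N=\rho_q(d,m)-r$ is precisely the $r$-th largest. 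You go the other way: starting from the Heijnen--Pellikaan tuple $\mu$, you count directly that the number of lex-smaller admissible tuples is $\sum_{j}\sum_{b<\mu_j}\rho_q(e_j-b,m-j)$, reindex this as $\sum_i\rho_q(i,m_i)$ for an explicit candidate $(m_d,\dots,m_1)$, and then verify conditions~2 and~3 of Theorem~\ref{thm:genrepMac} by hand, after which uniqueness identifies your candidate with the Macaulay coefficients of $\rho_q(d,m)-r$. Your route is a bit more self-contained---it uses only the uniqueness half of Theorem~\ref{thm:genrepMac} and bypasses Corollary~\ref{cor:lexorder} entirely---at the price of the case analysis for condition~3; the paper's route is shorter once Corollary~\ref{cor:lexorder} is available and makes the role of the lexicographic order more visible.
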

\begin{proof}
We know from Equation \eqref{Hrdm} that we need to show that $$\sum_{i=1}^d \lfloor q^{m_i} \rfloor=\sum_{i=1}^m \mu_i q^{m-i},$$
with $(\mu_1, \dots, \mu_{m})$ is the $r$-th element in descending lexicographic order among all $m$-tuples $(\beta_1,\dots,\beta_m)$ in $\{0,1,\dots,q-1\}^m$ satisfying
$\beta_1+ \cdots + \beta_{m} \le d$.
First of all note that since $r \ge 1$, we have $\rho_q(d,m)-r < \rho_q(d,m)$. In particular this implies that $m_d \le m-1$.
Therefore the coefficients of the $d$-tuple $(m_d,\dots,m_1)$ are in $\{-1,0,\dots,m-1\}$.
Now for $1 \le i \le m+1$ define $\mu_i:=|\{j \, : \, m_j=m-i\}|.$
Since the $d$-tuple $(m_d,\dots,m_1)$ is nonincreasing by Condition 2 from Theorem \ref{thm:genrepMac}, we can reconstruct it uniquely from the $(m+1)$-tuple $(\mu_{1},\mu_2,\dots,\mu_{m+1}).$
Moreover, Condition 3 from Theorem\ref{thm:genrepMac}, implies that $(\mu_1,\dots,\mu_m) \in \{0,1,\dots,q-1\}^m$, but note that $\mu_{m+1}$ could be strictly larger than $q-1$.
Further by construction we have $\mu_1+\cdots+\mu_m+\mu_{m+1}=d$, implying that $\mu_1+\cdots+\mu_m \le d$.
Note that $\mu_{m+1}$ is determined uniquely by $(\mu_1,\dots,\mu_m)$, since $\mu_0=d-\mu_1-\cdots-\mu_m$.
Therefore the correspondence between the $d$-tuples $(m_d,\dots,m_1)$ of coefficients of the $d$-th Macaulay representations with respect to $q$ of integers $0 \le N < \rho_q(d,m)$
and the $m$-tuples $(\mu_1, \dots , \mu_{m})\in \{0,1,\dots,q-1\}^m$ satisfying $\mu_1+ \cdots + \mu_{m} \le d$, is a bijection.
Moreover by construction we have $$\sum_{i=1}^d\lfloor q^{m_i} \rfloor =\sum_{j=1}^{m+1} \mu_j \lfloor q^{m-j} \rfloor = \sum_{j=1}^{m} \mu_j q^{m-j}.$$

What remains to be shown is that the constructed $m$-tuple coming from the integer $\rho_q(d,m)-r$ is in fact the $r$-th in descending lexicographic order.
First of all, by Corollary \ref{cor:help} we see that for $r=1$ and $d=aq+b$ that the $m$-tuple associated to $\rho_q(d,m)-1$ equals $(q-1,\dots,q-1,b,0,\dots,0)$,
which under the lexicographic order is the maximal $m$-tuple among all $m$-tuples $(\beta_1, \dots , \beta_{m})\in \{0,1,\dots,q-1\}^m$ satisfying $\beta_1+ \cdots + \beta_{m} \le d$.
Next we show that the conversion between $d$-tuples $(m_d,\dots,m_1)$ to $m$-tuples $(\mu_1,\dots,\mu_m)$ preserves the lexicographic order.
Suppose therefore that $1\le r \le s \le  \rho_q(d,m)$. We write $N:=\rho_q(d,m)-s$ and $M:=\rho_q(d,m)-r.$ and denote their Macaulay coefficient tuples with $(n_d,\dots,n_1)$ and $(m_d,\dots,m_1)$.
Since $N \le M$, Corollary \ref{cor:lexorder} implies that $(n_d,\dots,n_1) \preceq (m_d,\dots,m_1)$.
Also, since these $d$-tuples are nonincreasing, this implies that their associated $m$-tuples $(\nu_1,\dots,\nu_m)$ and $(\mu_1,\dots,\mu_m)$ satisfy
$(\nu_1,\dots,\nu_m) \preceq (\mu_1,\dots,\mu_m)$.
Indeed assuming without loss of generality that $\nu_1<\mu_1$ we see that $m_i=n_i=m-1$ for $d-\nu_1 \le i \le d$ but $n_i < m_i=m-1$ for $i=\nu_1+1$.
Now the desired result follows immediately.
\end{proof}

Combining this theorem with the greedy algorithm in Corollary \ref{cor:greedy}, it is very simple to compute values of $\bar e^{\AA}_r(d,m)$ or equivalently of $d_r(\RM(d,m))$. We illustrate this in the two following examples. The parameters in these example also occur in examples from \cite{HP}.
\begin{example}
Let $q=4$, $r=8$, $d=m=3$. Since $d\le q-1$, we may work with the usual Macaulay representation when applying Theorem \ref{thm:main}. We have $\rho_q(d,m)=\binom{6}{3}=20$ and hence
$$\rho_q(d,m)-r=12=\binom{5}{3}+\binom{2}{2}+\binom{1}{1}=\rho_4(3,2)+\rho_4(2,0)+\rho_4(1,0)$$
is the $3$-rd Macaulay representation of $12$. Theorem \ref{thm:main} implies that $\bar e^{\AA}_8(3,3)=4^2+4^0+4^0=18$ and hence $d_8(\mathrm{RM}_4(3,3))=64-18=46$ in accordance with Example 6.10 in \cite{HP}.
\end{example}

\begin{example}
Let $q=2$, $r=10$, $d=3$ and $m=5$. We have $\rho_2(3,5)=26$ by Equation \eqref{eq:dimRM} and hence applying the greedy algorithm from Corollary \ref{cor:greedy}, we compute that
$$\rho_q(d,m)-r=16=15+1+0=\rho_2(3,4)+\rho_2(2,0)+\rho_2(1,-1)$$
is the $3$rd Macaulay representation of $16$ with respect to $2$. Theorem \ref{thm:main} implies that
$\bar e^{\AA}_{10}(3,3)=2^4+2^0=17$ and hence $d_8(\mathrm{RM}_2(3,5))=32-17=15$ in accordance with Example 6.12 in \cite{HP}.
\end{example}

\begin{remark} Theorem \ref{thm:main} is somewhat similar in spirit as Theorem 6.8 from \cite{HP} in the sense that in both theorems a certain representation in terms of dimensions of Reed--Muller codes is used to give an expression for $d_r(\RM(d,m))$. Where we studied decompositions of $\rho_q(d,m)-r$, in \cite{HP} the focus was on $r$ itself. This suggest there may exist a duality between the two approaches, but the similarities seem to stop there. The representation in \cite{HP} is not the Macaulay representation with respect to $q$ that we have used here. For us it is for example very important that each degree $i$ between $1$ and $d$ occurs once in Theorem \ref{thm:genrepMac} (implying that the greedy algorithm terminates after at most $d$ iterations), while this is not the case in Theorem 6.8 \cite{HP}. It could be interesting future work to determine if a deeper lying relationship between the two approaches exists.
\end{remark}

\end{document}